\documentclass{article}

\usepackage[english]{babel}
\usepackage[utf8x]{inputenc}
\usepackage[T1]{fontenc}


\usepackage{amsmath,amssymb,amsthm}
\usepackage{mathtools}
\usepackage{graphicx}
\usepackage{hyperref}
\usepackage{algorithm}
\usepackage[noend]{algpseudocode}
\usepackage{todonotes}

\usepackage{tikz-cd}
\usetikzlibrary{arrows.meta,chains,decorations.pathreplacing}

\newcommand{\F}{\mathbb{F}}

\newcommand{\X}{\mathbb{X}}
\newcommand{\Y}{\mathbb{Y}}
\newcommand{\Z}{\mathbb{Z}}

\newtheorem{theorem}{Theorem}

\newtheorem{proposition}[theorem]{Proposition}
\newtheorem{observation}[theorem]{Observation}

\theoremstyle{definition}

\newtheorem{definition}[theorem]{Definition}

\theoremstyle{remark}
\newtheorem{remark}[theorem]{Remark}

\providecommand*\email[1]{\href{mailto:#1}{#1}}

\usepackage[margin = 4 cm]{geometry}

\title{Galois ring isomorphism problem}
\author{Karan Khathuria\thanks{University of Zurich, Switzerland (\email{karan.khathuria@math.uzh.ch})}}

\date{}

\begin{document}
\maketitle
\begin{abstract}
    Recently, Dor\"{o}z et al. (2017) proposed a new hard problem, called the finite field isomorphism problem, and constructed a fully homomorphic encryption scheme based on this problem. 
    In this paper, we generalize the problem to the case of Galois rings, resulting in the \textit{Galois ring isomorphism problem}. The generalization is achieved by lifting the isomorphism between the corresponding residue fields. As a result, this generalization allows us to construct cryptographic primitives over the ring of integers modulo a prime power, instead of a large prime number.
\end{abstract}
\section{Introduction}

Finite fields have been studied extensively due to their numerous applications in different areas of mathematics and computer science, for example, combinatorics, number theory, design theory, coding theory, cryptography, etc. Finite fields are at the base of the theory of finite commutative rings. A direct generalization of finite fields is the Galois rings. The theory of finite fields and Galois rings are parallel and have many similarities. As a result, it becomes natural to extend and study the applications of finite fields to the case of Galois rings. Some of the known applications of Galois rings include: public-key cryptography over residue class rings of integers \cite{gomez1991galois}, multisequence shift register synthesis over Galois rings \cite{armand}, digital signal processing \cite{lin1994rings,krishna1994rings} and algebraic coding theory \cite{z4codes,blake1972codes,spiegel1978codes}.

In a recent paper \cite{do17}, Doroz et al. presented some new cryptographic applications of finite fields, by introducing a new hard problem called the finite field isomorphism (FFI) problem. Let $p$ be a prime, let $\F_p$ be the finite field with $p$ elements, and let $\phi$ be an isomorphism between two distinct extensions of $\F_p$, say $\X := \F_p[x]/(f(x))$ and $\Y := \F_p[y]/(F(y))$. We start from a `short' element $a(x)$ (an element having small coefficients) in $\X$ and consider its image $\phi(a)$ in $\Y$. Informally, the decisional FFI problem asks one to distinguish $\phi(a)$ from a random element in $\Y$, without having the knowledge of the isomorphism $\phi$. Whereas, the computational FFI problem asks one to find the isomorphism $\phi$, given the representation of $\Y$ and some images of short elements. 
In the same paper, the authors developed a fully homomorphic encryption scheme based on the hardness assumption of the FFI problem. Later, in \cite{ho18}, also a signature scheme was proposed relying on this assumption.

In this work, we generalize the FFI problem to Galois rings. In particular, we define the Galois ring isomorphism (GRI) problem, that is analogous to the definition of the FFI problem with $\X$ and $\Y$ being two distinct isomorphic Galois rings. In this case, the coefficients of elements in $\X$ and $\Y$ live in $\Z/p^s\Z$, instead of $\Z/p\Z$. The analysis of the hardness of the FFI problem can simultaneously be extended to study the hardness of the GRI problem. Similar to the case of FFI problem, we describe some lattice based attacks to solve the GRI problem.

With respect to practical applications of the GRI problem, we need efficient algorithms to construct and use isomorphisms between two Galois rings. This can be achieved by lifting the isomorphism between the corresponding residue fields. Galois rings are finite local rings having maximal ideal $(p)$ and residue field $\F_{p^n}$, for some prime integer $p$. Therefore, an isomorphism between Galois rings naturally induces an isomorphism between the corresponding residue fields. Moreover, the isomorphisms between Galois rings are in one-to-one correspondence with the isomorphisms between the respective residue fields.   

The main advantage of this generalization is that the applications of the FFI problem can directly be extended to the Galois rings. Hence, we get a fully homomorphic encryption scheme and a signature scheme over the rings of integers modulo a prime power. This is an advantage because it is generally more efficient to do operations modulo $2^{64}$ than a $64$-bit prime. This is because the arithmetic logic units (ALUs) in the CPU perform arithmetic operations on integer binary numbers, which implies the operations modulo $2^s$ can be performed trivially whereas operations modulo large prime number involve modular reductions. Moreover, using the Chinese remainder theorem, we can further generalize the results over any arbitrary integer modulo ring.

The organization of the paper is as follows. In Section \ref{sec:FFI}, we recall the finite field isomorphism problem and discuss its important aspects, namely hardness and construction of isomorphisms. In Section \ref{sec:GRI}, we define the Galois ring isomorphism problem and we provide an algorithm to construct Galois ring isomorphism by lifting the isomorphisms between the corresponding residue fields. Then, in the same section, we discuss the hardness of the GRI problem and describe some lattice based techniques to solve it. In Section \ref{sec:further_gen}, we remark about further generalization of the problem by using the Chinese remainder theorem. Finally, in Section \ref{sec:conclusion}, we draw some concluding remarks and discuss further works.

\section{Finite field isomorphism problem} \label{sec:FFI}

In \cite{do17}, Dor\"{o}z et al. proposed an encryption scheme based on a new (computationally) hard problem, called the finite field isomorphism problem. It relies on the difficulty of recovering a secret isomorphism between two finite fields. 

Let $\F_p$ be the finite field with $p$ elements, where $p$ is a prime number. Let  $f(x) \in \F_p[x]$ and $F(y) \in \F_p[y]$ be monic irreducible polynomials of degree $n$. Then $\X := \F_p[x]/(f(x))$ and $\Y := \F_p[y]/(F(y))$ are isomorphic fields with $p^n$ elements. Let $\phi$ be an isomorphism from $\X$ to $\Y$.

We will use the variable $x$ and lower case letters for polynomials in $\X$ and the variable $y$ with upper case letters for polynomials in $\Y$. We will perform reductions mod $p$ into the interval $(-p/2,p/2]$.
Let $\chi_\beta$ be a distribution that produces samples from $\X$ having coefficients bounded between $ -\beta$ and $\beta$, given $1 \leq  \beta \leq p/2$.

\begin{definition}[Finite Field Isomorphism Problems (FFI)]
Let $\X, \Y, \phi, \chi_\beta$ be as before and let $a_1(x),\dots,a_k(x)$ be samples from $\chi_\beta$ with corresponding images $A_1(y),\dots,A_k(y)$ in $\Y$.
\begin{itemize}
    \item The \emph{computational} FFI problem (CFFI) is: Given $\Y,A_1(y),\dots,A_k(y)$, recover $f(x)$ and/or the preimages $a_1(x),\dots,a_k(x)$.
    \item The \emph{decisional} FFI problem (DFFI) is: Given $\Y, A_1(y),\dots,A_k(y)$ and $B_1(y),$ $B_2(y)$, where one of $B_1(y), B_2(y)$ is the image of a sample from $\chi_\beta$ and the other one is taken uniformly from $\Y$, distinguish, with a probability greater than 1/2, the element that was constructed using $\phi$.
\end{itemize}
\end{definition}

The hardness of the FFI problem is based on the following experimental observation.

\begin{observation}\cite[Observation 1]{do17}\label{obs1}
Let $f(x)\in\F_p[x]$ and $F(y)\in\F_p[y]$ be chosen uniformly from the set of monic irreducible polynomials of degree $n$. Let $\X ,\Y, \phi$ and $\chi_\beta$ be defined as before. 
Then the image in $\Y$ of a collection of polynomials in $\X$ sampled from $\chi_\beta$ is computationally hard to distinguish from a collection of polynomials sampled uniformly in $\Y$. By a proper choice of parameters, the ability to distinguish such a collection can be made arbitrarily hard.
\end{observation}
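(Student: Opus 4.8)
Since the statement is an \emph{experimental} observation rather than a theorem, by a ``proof'' I mean an argument that assembles the available evidence, and the plan has three parts: first I would recast the distinguishing task as a question about a structured linear map modulo~$p$; then I would separate the easy ``large $\beta$'' regime, where statistical indistinguishability is immediate, from the substantive ``small $\beta$'' regime; and finally I would argue that in the latter regime no known algorithm exploits the residual algebraic structure, so that the best attacks are the lattice attacks analysed later in the paper.

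For the reformulation, I would fix $\F_p$-bases of $\X$ and $\Y$ — say the power bases $1,x,\dots,x^{n-1}$ and $1,y,\dots,y^{n-1}$ — so that $\phi$ becomes an invertible matrix $M\in\mathrm{GL}_n(\F_p)$. The key feature is that $M$ is \emph{not} arbitrary: it is determined by $\phi(x)$, i.e.\ by a root of $f$ inside $\Y$, hence depends on only $n$ parameters, and it intertwines the multiplication laws of the two rings. A sample $a(x)\sim\chi_\beta$ is a vector in the box $\{-\beta,\dots,\beta\}^n$, and its image is $Ma\bmod p$ reduced into $(-p/2,p/2]^n$. Thus DFFI is the problem of distinguishing $Ma\bmod p$ from uniform on $\F_p^n$ given a few samples and a representation of $\Y$, and CFFI is the problem of recovering $M$, equivalently $f$, equivalently the root.

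There are then two regimes. When $\beta$ is a constant fraction of $p/2$, the vector $a$ is already nearly uniform on $\F_p^n$ coordinatewise, so $Ma\bmod p$ is statistically close to uniform regardless of $M$, and the observation is essentially immediate; a short character-sum estimate makes this quantitative. The content of the observation lies in the regime of small $\beta$ — the one that matters for the homomorphic-encryption application, where short elements must stay short under arithmetic — where $Ma\bmod p$ ranges over a tiny, highly structured subset of $\F_p^n$, so there is no statistical indistinguishability and one must argue \emph{computational} indistinguishability instead.

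The hard part, and the reason this is an observation and not a theorem, is exactly this small-$\beta$ regime together with the fact that $M$ is pinned down by only $n$ parameters and commutes with both multiplication laws, which makes it impossible to rule out an \emph{algebraic} distinguisher or key-recovery attack a priori. I would address this by writing down the natural equations available to an attacker — from the unknown but exact short preimages $a_i$ together with the constraint that they are short — and observing that recovering them amounts to finding a short vector in an explicit lattice built from the $A_i$ and $F$, for which the best known algorithms are lattice-reduction attacks with cost exponential in $n$ for suitable $n,p,\beta,k$. The observation is thus justified by the statistical argument in the large-$\beta$ regime together with the claim that, for small $\beta$, these lattice attacks are essentially optimal — a claim supported by experiments and by the close analogy with the hardness of LWE- and NTRU-type problems, which the later sections examine in detail.
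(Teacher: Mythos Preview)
The paper does not prove this statement at all: it is quoted as an empirical observation from \cite{do17}, and the only supporting material the paper supplies is the subsequent bullet list of known attacks (the two lattice attacks and the non-linear algebraic attack), with the implicit message that none of them runs in polynomial time for well-chosen parameters. You correctly recognised that there is no proof to reproduce and that the content is a hardness \emph{assumption}; on that point you and the paper agree.

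Your write-up, however, goes well beyond what the paper offers. The large-$\beta$/small-$\beta$ dichotomy and the character-sum argument giving statistical indistinguishability for $\beta$ a constant fraction of $p/2$ are your own additions; the paper never isolates such a regime. Conversely, you omit one ingredient the paper does list as evidence: the non-linear algebraic attack, in which one writes $A_i(y)=a_i(\phi(x))\bmod F(y)$, equates coefficients to get $2n$ equations in $3n$ unknowns, and attempts Gr\"obner-basis elimination; the paper cites its apparent exponential cost alongside the lattice attacks as support for the observation. If your goal is to match the paper, the faithful summary is simply: stated without proof, justified experimentally in \cite{do17}, and corroborated here only by the analysis of the best known attacks.
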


In the same paper, Dor\"oz et al. presented three ways to solve the FFI problem. The first two are based on lattice reduction algorithms, whereas the third one is based on solving non-linear polynomial equations. 
\begin{itemize}
    \item \textit{Lattice attacks:} The isomorphism $\phi: \X \to \Y$ is also an $\F_p$-vector space isomorphism. Hence, $\phi^{-1}$ can be described by an $n \times n$ matrix $M$, i.e., if $\phi^{-1}(A(y)) = a(x),$ then $ a = AM$ where $A,a$ are the coefficient vectors of the polynomials $A(y)$ and $a(x)$, respectively. By collecting several samples of images of short polynomials in $\Y$, one gets an instance of a shortest vector problem in a publicly known lattice. In Section \ref{sec:hardness_GRI}, we describe, in detail, an analogue of this attack for the case of the Galois ring isomorphism problem.
    \item \textit{Non-linear algebraic attack:} The strategy here is to recover the image of $x$ (and hence the isomorphism $\phi$) by solving a system of high degree multivariate polynomial equations. Suppose the attacker knows $A_1(y),A_2(y) \in \Y$ such that they are images of the elements  $a_1(x),a_2(x) \in \X$ sampled from $\chi_\beta$. Then for each $i \in \{1,2\}$, $A_i(y) = a_i(\phi(x)) \pmod{F(y)}$. By equating the coefficients, the attacker obtains $2n$ non-linear equations in $3n$ unknowns (the coefficients of $a_1(x),a_2(x)$ and $\phi(x)$). Then the attacker can eliminate the coefficients of $\phi(x)$ (e.g. by using Gr\"{o}bner basis algorithms) to obtain a system of $n$ non-linear equations in $2n$ unknowns. Solving such a system appears to be exponentially difficult. 
\end{itemize}

\subsection{Constructing an Isomorphism} \label{sec:isomorphism}
For any cryptographic application of the FFI problem, we need to efficiently construct an isomorphism between finite fields. This can be achieved using Algorithm \ref{algo:FFI}.

\begin{algorithm}[ht]
\caption{Finite field isomorphism construction}\label{algo:FFI}
\begin{flushleft}
Input: monic irreducible polynomials $f(x),F(y)$ of degree $n$ over $\F_p$.\\ 
Output: an isomorphism $\phi: \X \to \Y$, where $\X := \F_p[x]/(f(x))$ and $\Y := \F_p[y]/(F(y))$.
\end{flushleft}
\begin{algorithmic}[1]
\State  Find a root $\alpha \in \Y$ of $f(t) \in \Y[t]$ and define the map $\phi$ by setting $\phi(x) = \alpha$. Let $A(y)$ be the polynomial representation of $\alpha$ in $\Y$. 
\State Find $\beta \in \X$ such that $F(\beta) = 0 \in \X$ and $A(\beta) = x \pmod{f(x)}$. This defines the inverse map $\phi^{-1}: \Y \to \X$ by setting $\phi^{-1}(y) = \beta$. 
\end{algorithmic}
\end{algorithm}

Algorithm \ref{algo:FFI} consists of two main steps. Let $\X := \F_p[x]/(f(x))$ and $\Y := \F_p[y]/(F(y))$, as in the algorithm. The first step is to compute a root $\alpha \in \Y$ of $f(t) \in \Y[t]$, which can be done using fast polynomial time algorithms, for example, \textsf{polrootsff} routine of Pari-GP \cite{PARI2}. The second step is to compute the inverse isomorphism by finding $\beta \in \X$ which is a root of $F(t) \in \X[t]$ and satisfy $A(\beta) - x = 0$ in $\X$, where $A(y)$ is the polynomial representation of $\alpha$ in $\Y$. For more details on Algorithm \ref{algo:FFI}, we refer the interested reader to \cite[Algorithm 1]{do17}.

 \begin{remark}
 The polynomial $f(x)$ must be chosen independently from the polynomial $F(y)$ such that knowledge of $F(y)$ does not give any information about $f(x)$. Otherwise, the attacker may try to gather information about $f(x)$ when she has isomorphic images in $\Y$ of short polynomials in $\X$.
 \end{remark}

\section{Galois ring isomorphism problem} \label{sec:GRI}
In this section, we generalize the finite field isomorphism problem to the case of Galois rings. We first recall the definition and properties of Galois rings. We refer to \cite{mcdonald1974finite} as a classical reference for the theory of finite rings and to \cite{bini2012finite,wan2003lectures} for more detailed reference for Galois rings. 

\begin{definition}
Let $p$ be a prime number, $s$ and $n$ be positive integers. A Galois ring $GR(p^s,n)$ is a finite commutative unitary local ring of characteristic $p^s$ and cardinality $p^{sn}$.
\end{definition}

\begin{proposition}\cite[Theorem 14.6, Lemma 14.2]{wan2003lectures} Let $GR(p^s,n)$ be a Galois ring. Then 
\begin{enumerate}
    \item $GR(p^s,n)$ is isomorphic to $\left(\Z/p^s\Z\right)[x]/(f(x))$, for any monic polynomial $f(x) \in \left(\Z/p^s\Z\right)[x]$ of degree $n$ whose reduction modulo $p$ is irreducible in $\left(\Z/p\Z\right)[x]$.
    \item $GR(p^s,n)$ is a local ring with a maximal ideal $(p) = p GR(p^s,n)$ and residue field $GR(p^s,n)/(p) \cong \F_{p^n}$. In particular, we have a homomorphism $\pi: GR(p^s,n) \to \F_{p^n}$ given by taking reduction modulo $p$, whose kernel is $(p)$.
\end{enumerate}
\end{proposition}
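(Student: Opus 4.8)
The plan is to prove both parts through the concrete description. Fix a monic $f \in (\Z/p^s\Z)[x]$ of degree $n$ whose reduction $\bar f$ is irreducible over $\F_p$, and set $R := (\Z/p^s\Z)[x]/(f(x))$; I would first show that $R$ has the properties claimed in Part 2 (so in particular $R$ is a Galois ring $GR(p^s,n)$), and then show that any two such rings are isomorphic, which gives Part 1. Since $f$ is monic of degree $n$, division with remainder shows that $\{1,x,\dots,x^{n-1}\}$ is a free $(\Z/p^s\Z)$-basis of $R$; hence $R$ is finite with $\lvert R\rvert = (p^s)^n = p^{sn}$, the coefficient ring embeds into $R$ as a direct summand so $\mathrm{char}\,R = p^s$, and $R$ is commutative and unital. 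For the local structure, note $R/(p) \cong (\Z/p\Z)[x]/(\bar f(x)) \cong \F_{p^n}$ because $\bar f$ is irreducible of degree $n$; thus $(p)$ is a maximal ideal and the reduction map $\pi\colon R \to \F_{p^n}$ has kernel $(p)$. To see that $R$ is local with $(p)$ its only maximal ideal, take $a \in R \setminus (p)$: then $\pi(a) \ne 0$ is invertible in the field $R/(p)$, so $ab = 1 + pc$ for some $b,c \in R$; since $p^s = 0$ in $R$ the element $pc$ is nilpotent, hence $1 + pc$ is a unit and so is $a$. Thus $R \setminus (p)$ consists of units, $R$ is local with maximal ideal $(p)$, and Part 2 is proved.

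For Part 1 it remains to show that the isomorphism type of $R$ does not depend on the chosen polynomial, i.e.\ that $R_f := (\Z/p^s\Z)[x]/(f)$ and $R_g := (\Z/p^s\Z)[x]/(g)$ are isomorphic whenever $f,g$ are monic of degree $n$ with $\bar f,\bar g$ irreducible over $\F_p$. I would build an isomorphism by Hensel lifting inside $R_f$. Since $R_f/(p) \cong \F_{p^n}$ is a splitting field for $\bar g$, there is $\bar\alpha \in R_f/(p)$ with $\bar g(\bar\alpha) = 0$, and $\bar g'(\bar\alpha) \ne 0$ because $\bar g$, being irreducible over the perfect field $\F_p$, is separable. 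Pick any lift $\alpha_0 \in R_f$ of $\bar\alpha$; then $g(\alpha_0) \in (p)$ while $g'(\alpha_0)$ is a unit (its image mod $p$ is nonzero and $R_f$ is local). The Newton iteration $\alpha_{k+1} = \alpha_k - g(\alpha_k)\,g'(\alpha_k)^{-1}$ is well defined, since $\alpha_k \equiv \alpha_0 \pmod{(p)}$ keeps $g'(\alpha_k)$ a unit, and it at least doubles the largest power of $(p)$ containing $g(\alpha_k)$ at each step; because $(p)^s = 0$ in $R_f$, after $\lceil \log_2 s \rceil$ steps we obtain $\alpha \in R_f$ with $g(\alpha) = 0$. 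Sending $x \mapsto \alpha$ then yields a $(\Z/p^s\Z)$-algebra homomorphism $\theta\colon R_g \to R_f$. Modulo $(p)$, $\theta$ induces the map $\F_p[x]/(\bar g) \to \F_{p^n}$ sending $\bar x$ to $\bar\alpha$, whose image is $\F_p(\bar\alpha) = \F_{p^n}$ since $\bar g$ is the minimal polynomial of $\bar\alpha$ and has degree $n$; hence $\theta(R_g) + (p)R_f = R_f$, and since $p$ itself lies in $\theta(R_g)$ this propagates by induction to $\theta(R_g) + (p)^k R_f = R_f$ for all $k$, so $\theta(R_g) = R_f$ as $(p)^s = 0$. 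A surjection between finite sets of equal cardinality $p^{sn}$ is a bijection, so $\theta$ is an isomorphism.

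The Hensel lift in the middle paragraph is the technical heart, although it is gentle here: because the maximal ideal $(p)$ is nilpotent, Newton's iteration terminates after $\lceil \log_2 s \rceil$ steps rather than merely converging, and the rest is bookkeeping. The one point that genuinely demands care — and what I expect to be the main obstacle — is reconciling the above with the wording of the Definition: the bare conditions ``finite commutative local ring of characteristic $p^s$ and cardinality $p^{sn}$'' do not by themselves pin down the ring (for instance $(\Z/p^s\Z)[x]/(x^2)$ also satisfies them, yet its maximal ideal strictly contains $(p)$), so $GR(p^s,n)$ is to be understood as the ring obtained from a monic polynomial $f$ as in the statement, equivalently the unique such ring whose maximal ideal equals $pR$. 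To close the loop one checks, for such an abstract $R$, that $\mathfrak m = pR$ already forces the residue field to be $\F_{p^n}$ by a length count along the chain $R \supset pR \supset \dots \supset p^{s-1}R \supset 0$, and then the Hensel argument above, carried out inside $R$ itself (lifting a generator of $\F_{p^n}$ over $\F_p$ to a root of $f$ in $R$), realizes $R$ as $(\Z/p^s\Z)[x]/(f)$; with that reading the two paragraphs above prove the Proposition.
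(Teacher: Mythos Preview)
The paper does not supply a proof of this proposition; it is simply cited from Wan's textbook, so there is nothing in the paper to compare your argument against directly. Your proof is correct and self-contained.

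Two remarks are still worth making. First, the Hensel-lifting step you use for Part~1 is essentially the argument the paper gives for the \emph{next} proposition (lifting a simple root from the residue field to the Galois ring); the only difference is that the paper runs the linear Newton iteration for $s-1$ steps, whereas you use the quadratically convergent version terminating in $\lceil \log_2 s\rceil$ steps. Second, your closing observation is well taken: the paper's Definition of $GR(p^s,n)$ as merely ``a finite commutative unitary local ring of characteristic $p^s$ and cardinality $p^{sn}$'' is genuinely too weak to determine the isomorphism type, and your counterexample $(\Z/p^s\Z)[x]/(x^2)$ (for $n=2$) is valid. Your repair --- reading $GR(p^s,n)$ as the unique such ring whose maximal ideal equals $pR$, and checking via the length count along $R\supset pR\supset\cdots\supset p^{s-1}R\supset 0$ that this forces $R/pR\cong\F_{p^n}$ --- is the right one, and your Hensel argument then realises any such abstract $R$ concretely as $(\Z/p^s\Z)[x]/(f)$.
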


Let $\X$ and $\Y$ be two (isomorphic) Galois rings of characteristic $p^s$ and cardinality $p^{sn}$. More precisely, let $f(x) \in \left(\Z/p^s\Z\right)[x]$ and $F(y) \in \left(\Z/p^s\Z\right)[y]$ be monic polynomials of degree $n$ such that they are irreducible modulo $p$, and define:
\[ \X = \left(\Z/p^s\Z\right)[x]/(f(x)) \qquad \mbox{and} \qquad \Y = \left(\Z/p^s\Z\right)[y]/(F(y)).\]
Let $\phi$ be an isomorphism between $\X$ and $\Y$. 

All the reductions we perform will be centered at 0, i.e., reductions mod $p$ belong to the interval $(-p/2,p/2]$, and reductions mod $p^s$ belong to the interval $(-p^s/2,p^s/2]$.
Let $\chi_\beta$ be a distribution that produces samples from $\X$ having coefficients bounded between $ -\beta$ and $\beta$, given $1 \leq  \beta < p^s/2$.

\begin{definition}[Galois ring isomorphism problems (GRI)] \label{def:GRI}

Let $\X,\Y,\phi,\chi_\beta$ as before and let $a_1(x),\dots,a_k(x)$ be samples from $\chi_\beta$ with corresponding images $A_1(y),\dots,A_k(y)$ in $\Y$.
\begin{itemize}
    \item The \emph{computational} GRI problem (CGRI) is: Given $\Y,A_1(y),\dots,A_k(y)$, recover $f(x)$ and/or the preimages $a_1(x),\dots,a_k(x)$.
    \item The \emph{decisional} GRI problem (DGRI) is: Given $\Y, A_1(y),\dots,A_k(y)$ and $B_1(y),$ $B_2(y)$, where one of $B_1(y), B_2(y)$ is the image of a sample from $\chi_\beta$ and the other one is taken uniformly from $\Y$, distinguish, with a probability greater than 1/2, the element that was constructed using $\phi$.
\end{itemize}
\end{definition}

\subsection{Algorithm to construct Galois ring isomorphisms}

Let $f,F,\X,\Y$ be described as above. To construct an isomorphism $\phi$ between $\X$ and $\Y$, we lift the isomorphism between their residue fields. In this section, we will use the notation $\overline{\alpha}$ (resp. $\overline{f}(x)$) to denote $\alpha \pmod{p}$ (resp. $f(x) \pmod{p}$) for any $\alpha \in \Z/p^s\Z$ (resp. $f(x) \in \Z/p^s\Z[x]$).

Let $\phi$ be an isomorphism between $\X$ and $\Y$, and let $\pi_\X:\X \to \F_p[x]/\left(\overline{f}(x)\right)$ and $\pi_\Y: \Y \to \F_p[y]/\left(\overline{F}(Y)\right)$ be homomorphims given by reduction modulo $p$. Then we obtain the following commutative diagram:
\[ \begin{tikzcd}
 \X = (\Z/p^s\Z)[x]/(f(x)) \arrow{r}{\phi} \arrow[swap]{d}{\normalsize \pi_\X} & \Y = (\Z/p^s\Z)[y]/(F(y)) \arrow{d}{\pi_\Y}  \\%
 \F_p[x]/\left(\overline{f}(x)\right) \arrow{r}{\overline{\phi}}& \F_p[y]/\left(\overline{F}(y)\right)
\end{tikzcd}
\]
Since $\phi$ is an isomorphism, the induced map $\overline{\phi}$ is a field isomorphism. Conversely, given an isomorphism $\overline{\phi}: \F_p[x]/\left(\overline{f}(x)\right) \to \F_p[y]/\left(\overline{F}(y)\right)$, there exists a unique isomorphism $\phi: \X \to \Y$ that induces $\overline{\phi}$, see \cite[Theorem 5.1.4]{bini2012finite}. The construction and uniqueness of $\phi$ follows from the following proposition, which is a special case of \cite[Lemma 14.5]{wan2003lectures}. We provide the proof because it proves the correctness of Algorithm \ref{algo:GRI} that describes the construction of $\phi$.

\begin{proposition}\cite[Lemma 14.5]{wan2003lectures}
Let $R = GR(p^s,n)$ be a Galois ring and consider the homomorphism $\pi: R \to R/(p)$. Let $f(x)$ be a degree $n$ monic polynomial over $\Z/p^s\Z$ such that $\overline{f}(x)$ is irreducible over $\Z/p\Z$. Further, let $\overline{\alpha}$ be a root of $\overline{f}(x)$ in $R/(p)$. Then there exists a unique root $\beta \in R$ of $f(x)$ such that $\pi(\beta) = \overline{\alpha}$.
\end{proposition}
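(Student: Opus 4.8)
The plan is to read this proposition as a version of Hensel's lemma specialized to the Galois ring $R = GR(p^s,n)$, whose maximal ideal $(p)$ is \emph{nilpotent}: since $p^s = 0$ in $R$ we have $(p)^s = (0)$. The argument then splits into three pieces. \emph{Nonsingularity first.} Because $R/(p)\cong\F_{p^n}$ is a (perfect) finite field and $\overline{f}(x)$ is irreducible over $\F_p$, it is separable, so $\gcd(\overline{f},\overline{f}')=1$ and in particular $\overline{f}'(\overline{\alpha})\neq 0$ in $R/(p)$. Fix any $\beta_0\in R$ with $\pi(\beta_0)=\overline{\alpha}$. Then $\pi\bigl(f'(\beta_0)\bigr)=\overline{f}'(\overline{\alpha})\neq 0$, so $f'(\beta_0)\notin(p)$; as $R$ is local with maximal ideal $(p)$, this makes $f'(\beta_0)$ a unit in $R$, and the same holds for $f'(\gamma)$ whenever $\pi(\gamma)=\overline{\alpha}$.

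\emph{Existence.} I would construct $\beta$ by Newton iteration, setting $\beta_{i+1}=\beta_i-f(\beta_i)\,f'(\beta_i)^{-1}$. This is well defined by induction on $i$: if $\pi(\beta_i)=\overline{\alpha}$ then $\pi(f(\beta_i))=\overline{f}(\overline{\alpha})=0$, so $f(\beta_i)\in(p)$, $f'(\beta_i)$ is a unit by the previous paragraph, and $\beta_{i+1}\equiv\beta_i\pmod p$, hence $\pi(\beta_{i+1})=\overline{\alpha}$ as well. Using the purely formal identity $f(\gamma+h)=f(\gamma)+h\,f'(\gamma)+h^2 g$ (valid over any commutative ring, with $g\in R$ depending on $\gamma,h$) applied with $h=-f(\beta_i)f'(\beta_i)^{-1}$, one gets $f(\beta_{i+1})\in f(\beta_i)^2 R$. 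Since $f(\beta_0)\in(p)$, induction gives $f(\beta_i)\in(p)^{2^i}$, and once $2^i\ge s$ we have $(p)^{2^i}\subseteq(p)^s=(0)$, so $f(\beta_i)=0$. Taking $\beta:=\beta_i$ for such an $i$ yields a root of $f$ with $\pi(\beta)=\overline{\alpha}$.

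\emph{Uniqueness.} Suppose $\beta,\beta'$ are both roots of $f$ with $\pi(\beta)=\pi(\beta')=\overline{\alpha}$, and put $\delta=\beta'-\beta\in(p)$. The same expansion gives $0=f(\beta')-f(\beta)=\delta\bigl(f'(\beta)+\delta g\bigr)$ for some $g\in R$. Since $f'(\beta)$ is a unit and $\delta g\in(p)$, the factor $f'(\beta)+\delta g$ is a unit in $R$, and therefore $\delta=0$, i.e.\ $\beta'=\beta$.

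I expect the only genuine bookkeeping to be the quadratic-convergence step $f(\beta_{i+1})\in f(\beta_i)^2 R$: it has to be deduced from the formal expansion $f(X+Y)=f(X)+Y f'(X)+Y^2(\dots)$ over $R$, with \emph{no} division by factorials (which matters in characteristic $p^s$) — one just notes that $f(X+Y)-f(X)$ is divisible by $Y$ and computes the coefficient of $Y^1$. Combined with the nilpotence $(p)^s=(0)$, this is precisely what makes the Newton recursion terminate after $\lceil\log_2 s\rceil$ steps rather than merely converge $p$-adically. The two structural inputs carrying the whole argument are the separability of $\overline{f}$ (giving $\overline{f}'(\overline{\alpha})\neq 0$) and the locality of $R$ (turning ``nonzero mod $p$'' into ``unit'').
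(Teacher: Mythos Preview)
Your proof is correct and follows the same Hensel-lifting/Newton-iteration strategy as the paper. The differences are minor but worth noting: the paper states only the linear bound $f(\beta_i)\in(p^{i+1})$, runs $s-1$ iterations, and invokes the full Taylor expansion (with a footnote justifying that $k!$ divides the coefficients of $f^{(k)}$), whereas you track the sharper quadratic bound $f(\beta_i)\in(p)^{2^i}$ via the second-order truncation $f(X+Y)=f(X)+Yf'(X)+Y^2(\cdots)$, which sidesteps factorials altogether and terminates in $\lceil\log_2 s\rceil$ steps --- a cleaner route in characteristic $p^s$ and incidentally an improvement on the $s-1$ iterations of the paper's Algorithm~\ref{algo:GRI}. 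For uniqueness the paper factors $f(x)=(x-\beta)g(x)$ and argues $g(\beta')$ is a unit, while you reuse the same expansion to write $0=\delta\bigl(f'(\beta)+\delta g\bigr)$ directly; the two arguments are equivalent.
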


\begin{proof}
First note that $\overline{f}(x)$ has degree $n$ because $f(x)$ is monic. Thus, $R/(p) \cong \F_{p^n}$ and $\overline{f}(x)$ has $n$ distinct roots in $R/(p)$. This implies that $\overline{\alpha}$ exists and it is a simple root of $\overline{f}(x)$. 

Let $\alpha \in R$ be the trivial lift of $\overline{\alpha}$. We construct a sequence $\beta_0,\beta_1,\ldots,\beta_{s-1} \in R$ as follows:
\begin{align*}
    \beta_0 &= \alpha, \\
    \beta_{i+1} &= \beta_i - \left(f^\prime(\beta_i)\right)^{-1} f(\beta_i) \mbox{ for } i=0,\ldots,s-2.
\end{align*}
We will show that $\beta_{s-1}$ is a root of $f(x)$ and $\pi(\beta_{s-1}) = \overline{\alpha}$.

We first show that $\pi(\beta_i) = \overline{\alpha}$ for $i = 0,1,\ldots,s-1$. Clearly, $\pi(\beta_0) = \pi(\alpha)= \overline{\alpha}$. Now, let $i \geq 1$ and assume $\pi(\beta_{i-1}) = \overline{\alpha}$. Since $\overline{\alpha}$ is a simple root of $\overline{f}(x)$, we have that $\pi(f^\prime(\beta_{i-1})) = \overline{f^\prime}(\overline{\alpha}) = \left(\overline{f} \right)^\prime(\overline{\alpha}) \neq 0$. Hence, $f^\prime(\beta_{i-1})$ is a unit of $R$. Thus, \begin{align*}
    \pi(\beta_i) & = \pi(\beta_{i-1}) - \pi\left(\left(f^\prime(\beta_{i-1})\right)^{-1}\right) \pi(f(\beta_{i-1})) \\
     & = \overline{\alpha} - \pi\left(\left(f^\prime(\beta_{i-1})\right)^{-1}\right) \overline{f}(\overline{\alpha})\\
     & = \overline{\alpha}.
\end{align*} 

Now, we show that $f(\beta_i) \in (p^{i+1})$ for all $i=0,1,\ldots,s-1$. Clearly, since $\pi(f(\beta_0)) = \overline{f}(\overline{\alpha}) = 0$, we get $f(\beta_0) \in (p)$. Now, let $i \geq 1$ and assume that $f(\beta_{i-1}) \in (p^i)$. Then by Taylor's formula\footnote{Let $S$ be a ring and $f(x) = a_nx^n + \cdots + a_0 \in S[x]$. Then the Taylor's formula is given by \begin{align*} f(x+a) &= f(x) +  \frac{f^\prime(x)}{1!} a + \frac{f^{\prime \prime}(x)}{2!}a^2 + \cdots + \frac{f^{(n)}(x)}{n!}a^n, \mbox{ where } \\ 
    f^\prime(x) &= na_n x^{n-1} + (n-1)a_{n-1}x^{n-2} + \cdots + a_1,\\
    f^{(m)}(x) &= \left(f^{(m-1)}(x)\right)^\prime \mbox{ for } m \geq 2.
\end{align*} We note that the Taylor's formula is well-defined as $k!$ divides each coefficient of $f^{(k)}(x)$ for all $k \in \{2,\ldots,n\}$.} we have
\begin{align*}
f(\beta_{i}) &= f\left(\beta_{i-1} - f^\prime(\beta_{i-1})^{-1}f(\beta_{i-1}) \right) \\
& = f(\beta_{i-1}) + \frac{f^\prime(\beta_{i-1})}{1!} \left(-f^\prime(\beta_{i-1})^{-1}f(\beta_{i-1})\right) \\ 
& \qquad +  \frac{f^{\prime\prime}(\beta_{i-1})}{2!}\left(-f^\prime(\beta_{i-1})^{-1}f(\beta_{i-1})\right)^2 + \cdots \\
& \qquad + \frac{f^{(n)}(\beta_{i-1})}{n!}\left(-f^\prime(\beta_{i-1})^{-1}f(\beta_{i-1})\right)^n \\
& = \frac{f^{\prime\prime}(\beta_{i-1})}{2!}\left(-f^\prime(\beta_{i-1})^{-1}f(\beta_{i-1})\right)^2 + \cdots \\
& \qquad + \frac{f^{(n)}(\beta_{i-1})}{n!}\left(-f^\prime(\beta_{i-1})^{-1}f(\beta_{i-1})\right)^n \\
& \in (p^{i+1})
\end{align*} 
Thus, by induction, we have $\beta_{s-1} \in R$ such that $\pi(\beta_{s-1}) = \overline{\alpha}$ and $f(\beta_{s-1}) \in (p^s) =(0)$. Thus we found a root $\beta:= \beta_{s-1} \in R$ of $f(x)$ with $\pi(\beta) = \overline{\alpha}$.
 
To prove uniqueness, we first write $f(x) = (x-\beta)g(x)$, for some $g(x) \in R$. Observe that $\overline{g}(\overline{\alpha}) \neq 0$, as $\overline{\alpha}$ is a simple root of $f(x)$. 
Let $\beta^\prime$ be another root of $f(x)$ in $R$ such that $\pi(\beta^\prime) = \overline{\alpha}$. Then we have \[0 = f(\beta^\prime) = (\beta^\prime - \beta) g(\beta^\prime).\] Moreover, $\pi(g(\beta^\prime)) = \overline{g}(\overline{\alpha}) \neq 0$, which implies $g(\beta^\prime)$ is a unit of $R$ and hence $\beta^\prime = \beta$.
\end{proof}


\begin{algorithm}[ht]
\caption{Galois ring isomorphism construction}\label{algo:GRI}
\begin{flushleft}
Input: $f(x) \in \left(\Z/p^s\Z\right)[x]$ and $F(y) \in \left(\Z/p^s\Z\right)[y]$ monic polynomials of degree $n$ such that they are irreducible modulo $p$.\\ 
Output: an isomorphism $\phi: \X \to \Y$, where $\X := (\Z/p^s\Z)[x]/(f(x))$ and $\Y := (\Z/p^s\Z)[y]/(F(y))$.
\end{flushleft}
\begin{algorithmic}[1]
\State Compute $\overline{f}(x)$ and $\overline{F}(y)$, reductions of $f(x)$ and $F(y)$, respectively, modulo $p$. 
\State Construct an isomorphism $\overline{\phi}: \F_p[x]/\left(\overline{f}(x)\right) \to \F_p[y]/\left(\overline{F}(y)\right)$ using Algorithm \ref{algo:FFI} with input polynomials $\overline{f}(x)$ and $\overline{F}(y)$. Let $\overline{\alpha}$ be the image of $x$ under $\overline{\phi}$, i.e., $\overline{f}\left(\overline{\alpha}\right) = 0 \in \F_p[y]/\left(\overline{F}(y)\right)$.
\State Define $\beta_0 := \alpha \in \Y$, where $\alpha$ is the trivial lift of $\overline{\alpha}$.
\For{$i = 0,\ldots,s-2$}
\State define $\beta_{i+1} := \beta_{i} - \left(f^\prime(\beta_i)\right)^{-1} f(\beta_i) \in \Y$
\EndFor
\State Define the isomorphism $\phi$ by setting $\phi(x) = \beta_{s-1}$.
\end{algorithmic}
\end{algorithm}

\subsection{On the hardness of the Galois ring isomorphism problem} \label{sec:hardness_GRI}

Clearly, the Galois ring isomorphism (GRI) problem is a generalization of the finite field isomophism (FFI) problem. Hence, it follows that the GRI problem is as hard as the FFI problem in the sense that a polynomial time algorithm that solves the GRI problem would also solve the FFI problem. 

In some special cases, the GRI problem canonically reduces to the FFI problem. In particular, reduction to FFI problem is possible when $\beta < p/2$. Let $\X,\Y,\phi,\chi_\beta,$ $ a_1(x),\ldots,a_k(x),$ $A_1(y),\ldots,A_k(y)$ be as in Definition \ref{def:GRI}. By taking reduction modulo $p$, we obtain an instance of the FFI problem. Let $\pi_\X: \X \to \overline{\X}:= \F_p[x]/(\overline{f}(x))$ and $\pi_\Y: \Y \to \overline{\Y}:= \F_p[y]/(\overline{F}(y))$ be the homomorphisms given by taking reduction modulo $p$. Then for all $i \in \{1,\ldots,k\}$ we have $\pi_\X(a_i(x))= \overline{a}_i(x) = a_i(x)$, as the coefficients of $a_i$'s belong to $(-\beta,\beta]$. This gives an instance of the FFI problem with input $\overline{\X},\overline{\Y}, \chi^\prime_\beta$, $\pi_\X(a_1(x)),\ldots,\pi_\X(a_k(x))$, $\pi_\Y(A_1(y)),\ldots,\pi_\Y(A_k(y))$, where $\chi^\prime_\beta$ is a distribution on $\overline{\X}$ that produces samples whose coefficients are bounded between $-\beta$ and $\beta$.

In general if $\beta$ is not less than $p/2$, then the GRI problem does not directly reduces to the FFI problem. However, we can generalize the techniques of solving FFI problem to solve the GRI problem.

\subsubsection{Solving the GRI problem using lattice reduction algorithms}
In \cite{do17}, Dor\"{o}z et al. provided two ways to solve the FFI problem using lattice reduction techniques. In the following, we extend these techniques to solve the GRI problem. 

Let $\X, \Y, \phi \chi_\beta$, $a(x), A(y)$ be as in the Definition \ref{def:GRI}. We first note that $\phi$ (and equivalently $\phi^{-1}$) is also a $\Z/p^s\Z$-module homomorphism, because $\phi$ is given by $\sum_{i=0}^{n-1}a_ix^i \mapsto \sum_{i=0}^{n-1}a_i \phi(x)^i \pmod{F(y)}$. 

We identify a polynomial $a(x)$ with the coefficients vector $a=(a_0,\ldots,a_{n-1})$. Then the map $\phi$ can be described by using an $n \times n$ matrix over $\Z/p^s\Z$.
Let, for each $i \in \{0,\ldots,n-1\}$, $$\phi(x)^i := \sum_{j=0}^{n-1} m_{i,j} y^j \in \Y,$$ for some $m_{i,j} \in \Z/p^s\Z$. Let $M = (m_{i,j})$ be the corresponding matrix. Then, $A(y) = a(\phi(x)) \pmod{F(y)}$ implies that $A = aM$ over $\Z/p^s\Z$.
Moreover, the matrix $M$ is invertible, as there exists an $n \times n$ matrix $N = (n_{i,j}) \in \left(\Z/p^s\Z\right)^{n \times n}$, corresponding to $\phi^{-1}$ with respect to bases $\{1,x,\ldots,x^{n-1}\}$ and $\{1,y,\ldots,y^{n-1}\}$, such that $NM=MN=I_{n}$.

Using this vector/matrix correspondence, we can extend the two lattice attacks presented in \cite{do17} to our case. In the following, we describe one of the attacks in which the corresponding lattice has dimension nearly $2n$. 

Let $A_1,\ldots,A_k$ be the known vectors and $a_1,\ldots,a_k$ be the unknown vectors, corresponding to the polynomials $A_1(y),\ldots,A_k(y)$ and $a_1(x),\ldots,a_k(x)$, respectively. Then the unknown vectors $a_1 = A_1 N,\ldots, a_k = A_k N$ are small, i.e., the absolute value of the entries is bounded by $\beta$. We consider a single coordinate of these vectors, let $N_j^\intercal$ be the $j$-th column of the matrix $N$ and let \[b_j := (A_1 N_j^\intercal, A_2 N_j^\intercal, \ldots, A_k N_j^\intercal).\] Now, define the matrices \[P := (A_1^\intercal| A_2^\intercal | \cdots | A_k^\intercal), \quad Q := (a_1^\intercal|a_2^\intercal|\cdots|a_k^\intercal),\] and set \[D = \begin{pmatrix} P \\ p^s I_k \end{pmatrix}.\]

The matrices $P$ and $Q$ are of dimension $n\times k$ with entries from $\Z/p^s\Z$, the matrix $D$ has dimension $(n+k)\times k$ and $b_j$ is the vector consisting of the $j$-th coordinates of $a_i$'s.

Let $L(D)$ be the lattice generated by the rows of $D$, so it has dimension $k$ and it contains the short row vector $b_j$. If $k$ is chosen sufficiently large, then the vectors $b_j$ will be short relative to the Gaussian heuristic. As a result, such vectors (or a linear combination of them) can be recovered using lattice reduction algorithms.

\section{Notes on further generalizations} \label{sec:further_gen}

It is possible to further generalize the FFI problem in many different ways.
In the most general sense, we can define isomorphism between arbitrary finite commutative rings. Every finite commutative ring is a direct sum of finitely many finite local rings, see \cite[Theorem 3.1.4]{bini2012finite}. Moreover, each of these finite local rings is a homomorphic image of a polynomial ring over a Galois ring, see \cite[Theorem 6.3.1]{bini2012finite}. This implies that an isomorphism between finite rings can be obtained by constructing an isomorphism between the direct summands. 

For more practical purposes, we may restrict ourselves to the case of finite rings that are direct sums of Galois rings. 
Let $R_1,\ldots,R_k$ be Galois rings given by $R_i = \left(\Z/p_i^{s_i}\Z\right)[x]/(f_i(x))$ for each $i \in \{1,\ldots, k\}$, where $p_1,\ldots,p_k$ are distinct primes and for each $i \in \{1,\ldots,k\}$, $f_i(x)$ is a monic polynomial of degree $n$ whose reduction modulo $p_i$ is irreducible in $\left(\Z/p_i^{s_i}\Z\right)[x]$. Let $R = R_1 \oplus \cdots \oplus R_k$, i.e., \[R = \Z[x]/(p_1^{s_1},f_1(x)) \oplus \cdots \oplus \Z[x]/(p_k^{s_k},f_k(x)).\] Since the ideals $(p_i^{s_i},f_i(x))$ and $(p_j^{s_j},f_j(x))$ are co-maximal for all $i \neq j$, we can apply the Chinese remainder theorem to obtain the ring isomorphism \[R \cong \Z[x]/I, \] where $I = \prod_{i=1}^k (p_i^{s_i},f_i(x))$. One can check that $I = (m,f(x))$, where $m = p_1^{s_1} \cdots p_k^{s_k}$ and $f(x)$ is a monic polynomial of degree $n$ such that $f(x) \equiv f_i(x) \pmod{p_i^{s_i}}$ for each $i \in \{1,\ldots,k\}$. The polynomial $f(x)$ is unique modulo $m$, and can be obtained by applying the Chinese remainder theorem (over $\Z/m\Z$) for each coefficient of $f(x)$. This implies that \[R \cong \left(\Z/m\Z\right)/(f(x)),\] for a polynomial $f(x)$ that is irreducible modulo $p_i$ for each $i \in \{1,\ldots,k\}$. Hence, the isomorphisms between such rings can be constructed by constructing isomorphisms between each Galois ring component and then applying the Chinese remainder theorem. Consequently, we can extend the definition of isomorphism problem for such finite rings. Moreover, all the operations to construct an isomorphism are efficient and can be used for practical purposes.

\section{Conclusion} \label{sec:conclusion}

In this paper, we generalize the finite field isomorphism (FFI) problem to Galois rings, and define the Galois ring isomorphism (GRI) problem. We observe that, as in the case of the FFI problem, the best known techniques for solving the GRI problem is based on the lattice reduction algorithms. We show that the construction of a Galois ring isomorphism can efficiently be done by constructively lifting the isomorphism between corresponding residue fields. 

In \cite{do17} and \cite{ho18}, we have seen two applications of the FFI problem, namely, a fully homomorphic encryption scheme and a signature scheme, respectively. As a result of the generalization, we note that both the applications can be extended to the case of Galois rings (or more generally to the case of direct products of Galois rings). Consequently, we obtain the same cryptographic primitives over integer modulo rings. One major advantage of working over integer ring modulo $2^s$ is the efficiency of the cryptographic protocols: performing operations modulo $2^s$ on CPUs are way better than modulo an $s$-bit large prime number. 

\section*{Acknowledgement}
The author would like to thank Gianira Alfarano, Alessandro Neri and Violetta Weger for several useful discussions. 
This work was supported by Forschungskredit of the University of Zurich grant no. FK-19-080.

\bibliographystyle{plain}
\bibliography{biblio}

\end{document}